\newtheorem{theorem}{Theorem}[section]
\newtheorem{proposition}[theorem]{Proposition}
\theoremstyle{remark}
\DeclareMathOperator{\argmax}{argmax}
\title{Branch and Bound to Assess Stability of Regression Coefficients in Uncertain Models}
\author{Brian Knaeble}
\address{Department of Computer Science, Utah Valley University, Orem, UT}
\author{R. Mitchell Hughes}
\email{bknaeble@uvu.edu}
\author{George Rudolph}
\author{Mark A. Abramson}
\address{Department of Mathematics, Utah Valley University, Orem, UT}
\email{mark.abramson@uvu.edu}
\author{Daniel Razo}
\keywords{least squares, regression, model uncertainty, slope coefficients, branch and bound}
\begin{document}

\begin{abstract}
It can be difficult to interpret a coefficient of an uncertain model. A slope coefficient of a regression model may change as covariates are added or removed from the model. In the context of high-dimensional data, there are too many model extensions to check. However, as we show here, it is possible to efficiently search, with a branch and bound algorithm, for maximum and minimum values of that adjusted slope coefficient over a discrete space of regularized regression models. %Our introduced branch and bound algorithm allows us to search approximately one million times as efficiently as a brute force search. 
Here we introduce our algorithm, along with supporting mathematical results, an example application, and a link to our computer code, to help researchers summarize high-dimensional data and assess the stability of regression coefficients in uncertain models.
\end{abstract}

\maketitle

\section{Introduction}
\label{introduction}
The principle of least squares can be traced back to Gauss \cite{Gauss}. There is even a ``loose correspondence'' between polynomial regression and neural networks, and polynomial regression may be more interpretable \cite{Cheng}. %\textcolor{brown}{When performing least squares regression, we may interpret the *regression* coefficient as the effect of the explanatory variable on the response variable. However, the magnitude of the effect depends...}
However, outside the context of a randomized, controlled experiment, interpretation of a regression coefficient can be problematic. Within the context of an observational study there may be substantial model uncertainty when a regression model is fit to high dimensional data. A slope coefficient corresponding to a particular explanatory variable will depend on which subset of covariates has been selected to fit the model \cite{KD}. It may not be wise to include every covariate within the model \cite{DM}. If there are $p$ covariates under consideration then there are at least $2^p$ models to choose from.

Patel et al.\cite{Patel2015} have fit $2^{13}$ model extensions within the context of logistic regression, reporting wide variation of fitted coefficients, a phenomenon they describe as vibration of effects. Random selection of extensions is sufficient to demonstrate the existence of model uncertainty, but computing a standard error of estimates from random extensions would be insufficient for the purpose of quantifying uncertainty of an interpretation. The large number $2^{13}$ of randomly selected model extensions is relatively small compared to $2^p$ when $p$ is large, and there is no guarantee that the randomly selected model extensions inform accurate interpretation. In this paper we will require a model that is linear in its parameters, but we will be able to search over a much larger space of $2^{30}$ or more model extensions to find maximum and minimum effect estimates.  %in that case it may be unlikely to randomly select an admissible set of covariates \cite[Section 3.3.1]{Pearl2009}. Here we will require a model that is linear in its parameters, but we will be able to search over $2^{30}$ or more model extensions to find maximum and minimum effect estimates. 
%There are $2^p$ subsets of covariates to choose from, and even more models if we allow higher order terms, i.e. by conducting polynomial regression. Patel et al.\cite{Patel2015} fit $2^{13}$ model extensions within the context of logistic regression, and reported wide variation of fitted coefficients, a phenomenon they described as vibrations of effects. Here we require a model that is linear in its parameters, but we will be able to search over $2^{30}$ or more model extensions to find maximum and minimum effect estimates. %That is an approximately $2^{20}$ fold increase in efficiency over a brute force search.

There are assumptions that are often made to support causal interpretation. Within the context of this paper we could assume, conditional on a particular subset of measured covariates, that (natural) ``assignment'' of an explanatory variable is strongly ignorable, while also assuming the stable unit treatment value assumption (SUTVA) \cite[Chapter 3]{Imbens2015}. Alternatively, we could assume that a particular subset of the measured covariates constitutes an admissible set \cite[Section 3.3.1]{Pearl2009}. It is not true that conditioning on more covariates is always better for the purpose of causal inference \cite{Pim}. We may identify the causal effect with a slope coefficient if we assume that a {\em particular} subset of measured covariates is admissible, but that is a strong assumption. We may partially identify (see \cite{Tamer2010}) the causal effect with an interval that bounds slope coefficients of uncertain models without specifying that any particular subset is admissible but only by assuming the {\em existence} of an admissible subset of measured covariates and then utilizing the introduced methodology of this paper. %Our introduced methodology supports causal inference and also more general interpretation of coefficients in uncertain models (cite math interpretation, cite interpetable AI).
%
%While our methodology was designed with causal inference in mind, it's ap may be broadly applicable and support interpretable AI more broadly, and it could be understood within the context of interpretable AI in broader context,  %It may be reasonable to trade (apparent) precision for more certain identification, but our introduced algorithm may have other uses outside of causal inference.

Our main contribution is our introduced branch and bound algorithm that can be used to assess the stability of regression coefficients in uncertain models; see Section \ref{methods}. Our secondary contribution is a technical and general mathematical result; see Proposition \ref{Danny}. That mathematical result is interesting on its own, and it supports our algorithm. We think of our algorithm less as a methodology for causal inference and more as a technique to summarize a data set by computing bounds for effect estimates that are consistent with the data. % like the data set pictured in Table \ref{tab1}, 
The input to the algorithm is the data. The output of the algorithm is both the maximum and minimum effect estimates, computed over a discrete space of regularized models each fit with the principle of least squares.

Our paper is organized as follows. We introduce notation and describe our algorithm in Section \ref{methods}. A mathematical result and the results of trials are described in Section \ref{results}. % , where we also describe our algorithm and provide a link to our computer code that implements the algorithm in R. 
An example application is described in Section \ref{applications}. A discussion is provided in Section \ref{discussion}. In Appendix \ref{AppA} we describe the details of the data sets that we utilized during trials of our algorithm. %Implementation details are described in Appendix \ref{AppB}. 
\section{Methods}
\label{methods}
Here we describe our branch and bound algorithm that searches over a space of $2^p$ models to find minimum and maximum (adjusted) slope coefficients for the effect of $x$ on $y$. The algorithm will be applied to a data set with $n$ rows and $2+p$ columns; see Table \ref{tab1a}. There is a column vector $y$ of the response variable, a column vector $x$ of the explanatory variable of interest, and a matrix $s$ of covariate column vectors $s_1$ through $s_{p}$ each corresponding with a measured covariate or an interaction variable built from measured covariates. We assume throughout that all vectors are linearly independent. Also, without loss of generality we assume that the mean of the entries of any column vector is zero. In practice we pre-process the data to ensure mean-zero columns of data.

\begin{table}[h!]
\centering
\caption{An example of a data set}
\label{tab1a}
\begin{tabular}{rcccc}
\toprule
% & \multicolumn{2}{c}{Diabetes}\\
%\cmidrule{2-3}
$y$&$x$&$s_1$&$\cdots$&$s_p$\\
\hline
$y_1$&$x_1$&$s_{1_1}$&$\cdots$&$s_{p_1}$\\
$y_2$&$x_2$&$s_{1_2}$&$\cdots$&$s_{p_2}$\\
$\vdots$&$\vdots$&$\vdots$&$\ddots$&$\vdots$\\
$y_n$&$x_n$&$s_{1_n}$&$\cdots$&$s_{p_n}$.\\
\hline
\end{tabular}
\end{table}

At each stage of the algorithm we will analyze two disjoint $n\times p_w$ and $n\times p_z$ submatrices $w$ and $z$ of $s$. That pair $(w,z)$ of submatrices is variable, depending on the stage of the algorithm. At the initial stage we set $p_w=0$ so that $w=\emptyset$ while $p_z=p$ and $z=s$. At all stages, because $w$ and $z$ are disjoint, we have $p_w+p_z\leq p$. We keep track of the disjoint submatrices with disjoint subindexes $I_w=\{w_1,\cdots,w_{p_w}\}$ and $I_z=\{z_1,\cdots,z_{p_z}\}$ of the main covariate index $I_s=\{1,\cdots,p\}$. Here $I_w$ indexes the columns of $w$ and $I_z$ indexes the columns of $z$. At any given stage the columns of $w$ are the covariate vectors corresponding with covariates included within the model at that stage, while the columns of $z$ are the covariate vectors corresponding with covariates that are candidates for inclusion within an extension of the model at that stage.%The submatrices $w$ and $z$ are variable, and the pair $(w,z)$ depends on the stage of the algorithm. 

Given $(w,z)$ we define vectors of fitted values, $\hat{y}_w=w(w^tw)^{-1}w^ty$ and $\hat{x}_w=w(w^tw)^{-1}w^tx$, and also the matrix $\hat{z}_w=w(w^tw)^{-1}w^tz$. The columns of $\hat{z}_w$ are vectors of fitted values corresponding with the columns of $z$. Let $I_{\tilde{z}}$ be a subindex of $I_z$, and write $\tilde{z}$ for the corresponding submatrix of $z$ with $n$ rows and $p_{\tilde{z}}$ columns where $p_{\tilde{z}}\leq p_z$. We write $\sigma^2$ for variance, $\rho$ for correlation, $R^2$ for coefficients of determination, and $\beta$ for a slope coefficient of interest, while utilizing subscripts to distinguish between quantities of different objects and models. We compute $R^2$ values by projecting the vector of the second subscript onto the span of the columns of the first subscript. For $\beta$ the vector of the first subscript corresponds with the explanatory variable of interest, the vector of the second subscript corresponds with the response variable, and the columns of the matrix of the third subscript correspond with the covariates. The symbol $\hat{\rho}_{x-\hat{x}_w,y-\hat{y}_w}(\tilde{z}-\hat{\tilde{z}}_w)$ stands for the correlation between variables corresponding with vectors of fitted values after $x-\hat{x}_w$ and $y-\hat{y}_w$ are regressed onto $\tilde{z}-\hat{\tilde{z}}_w$. See \cite{KOA} for additional details.

By applying \cite[Proposition 1.1]{KOA} to residual vectors we obtain
\begin{equation}\label{formula}
\beta_{x-\hat{x}_w,y-\hat{y}_w;\tilde{z}-\hat{\tilde{z}}_w}=\frac{\sigma_{y-\hat{y}_w}}{\sigma_{x-\hat{x}_w}}\left(\frac{\rho_{x-\hat{x}_w,y-\hat{y}_w}-R_{\tilde{z}-\hat{\tilde{z}}_w,x-\hat{x}_w}R_{\tilde{z}-\hat{\tilde{z}}_w,y-\hat{y}_w}\hat{\rho}_{x-\hat{x}_w,y-\hat{y}_w}(\tilde{z}-\hat{\tilde{z}}_w)}{1-R^2_{\tilde{z}-\hat{\tilde{z}}_w,x-\hat{x}_w}}\right).
\end{equation}
Since $\hat{\rho}_{x-\hat{x}_w,y-\hat{y}_w}(\tilde{z})$ is a correlation coefficient, and since $R^2$ values are non-negative and non-decreasing when additional explanatory variables are utilized, we have 
\begin{subequations}
\label{bounds}
\begin{align}
-1&\leq \hat{\rho}_{x-\hat{x}_w,y-\hat{y}_w}(\tilde{z})\leq 1,\\
0&\leq R^2_{\tilde{z}-\hat{\tilde{z}}_w,x-\hat{x}_w}\leq R^2_{z-\hat{z}_w,x-\hat{x}_w},\textrm{~and}\\
0&\leq R^2_{\tilde{z}-\hat{\tilde{z}}_w,y-\hat{y}_w}\leq R^2_{z-\hat{z}_w,y-\hat{y}_w}.
\end{align}
\end{subequations}
The constant-time algorithm of \cite[Proposition 2.2]{KOA} analyzes (\ref{formula}) and takes the bounds of (\ref{bounds}) as inputs to produce bounds $l(w,z)$ and $u(w,z)$ as outputs that satisfy
\begin{equation}\label{lu}
    l(w,z)\leq \beta_{x-\hat{x}_w,y-\hat{y}_w;\tilde{z}-\hat{\tilde{z}}_w}\leq u(w,z).
\end{equation} 
%The bounds of (\ref{lu}) hold for any and every subset $z_s\subseteq z$.

Our branch and bound algorithm searches over a space of models by utilizing a queue and repeatedly updating the variables 
\[\textrm{lower}(\beta_{x,y;\cdot})\textrm{~and~}\textrm{upper}(\beta_{x,y;\cdot}),\]
both initialized at $\beta_{x,y;\emptyset}$. %Each item consists of a model determined by $w$ and a set of potential covariates $z$. 
The items of the queue are $(I_w,I_z)$ pairs, and for a given item we obtain the matrices $w$ and $z$ from the indexes $I_w$ and $I_z$. Subsequent computations are then done on $w$, $z$, $x$, and $y$. 

The first item to be pushed into the queue is $(I_w,I_z)=(I_\emptyset,I_s)$. When an item is popped from the queue the following operations are performed. First, update
\begin{subequations}
\label{recops}
    \begin{align*}
        &\textrm{lower}(\beta_{x,y;\cdot})=\min \{\textrm{lower}(\beta_{x,y;\cdot}),\beta_{x-\hat{x}_w,y-\hat{y}_w}\} \textrm{~and}\\
        &\textrm{upper}(\beta_{x,y;\cdot})=\max \{\textrm{upper}(\beta_{x,y;\cdot}),\beta_{x-\hat{x}_w,y-\hat{y}_w}\}.
    \end{align*}
\end{subequations}
Second, via (\ref{lu}), only if
\begin{equation}
\label{pot}
    l(w,z)<\textrm{lower}(\beta_{x,y;\cdot})\textrm{~or~}u(w,z)>\textrm{upper}(\beta_{x,y;\cdot}),
\end{equation}
determine 
%and if either of the inequalities of (\ref{pot}) hold true we then say that $z$ has \textit{potential} given $w$ and $(\textrm{lower}(\beta_{x,y;\cdot}),\textrm{upper}(\beta_{x,y;\cdot})$. In the absence of potential we discard the current item. If there is potential we then proceed to determine
\begin{equation}
\label{bstar}
z_\star=\argmax_{z_i\in I_z}|\rho_{z_i-\hat{z_i}_w,x-\hat{x}_w}\rho_{z_i-\hat{z_i}_w,y-\hat{y}_w}|
\end{equation}
and place the additional items $(I_w,I_z\setminus z_\star)$ and $(I_w\cup z_\star,I_z\setminus z_\star)$ into the queue. The next item is then popped from the queue. The algorithm continues until the queue is empty, and the final values of the temp variables are returned as lower and upper bounds for $\beta_{x,y;\tilde{s}}$, where $\tilde{s}$ is any $n$-row submatrix of the covariate matrix $s$. Pseudocode is provided in Algorithm \ref{alg:bounds}. Implementations are provided in R and Python at \cite{Hughes24}.

\begin{algorithm}
  \caption{Bounding $\beta$}
  \begin{algorithmic}[1]
    \Statex
    \Function{Bounds}{$x, y, s$}
      \State $\textrm{lower}(\beta_{x,y;\cdot}), \textrm{upper}(\beta_{x,y;\cdot}) \gets \beta_{x,y;0}, \beta_{x,y;0}$
      \State $node \gets (I_\emptyset,I_s)$
      \State $frontier \gets \textrm{a FIFO queue}$
      \State $\textproc{push}(node)$
      \While{$\textbf{not}~\textproc{isEmpty}(frontier)$}
        \State $node \gets \textproc{pop}(frontier)$
        \State $I_w,I_z \gets node$
        \State $w,z \gets s[,I_w], s[,I_z]$
        \State $\textrm{lower}(\beta_{x,y;\cdot}) \gets \min(\{\textrm{lower}(\beta_{x,y;\cdot}),\beta_{x-\hat{x}_w,y-\hat{y}_w}\})$
        \State $\textrm{upper}(\beta_{x,y;\cdot}) \gets \max(\{\textrm{upper}(\beta_{x,y;\cdot}),\beta_{x-\hat{x}_w,y-\hat{y}_w}\})$
        \If{$l(w,z)<\textrm{lower}(\beta_{x,y;\cdot})\textrm{~\textbf{or}~}u(w,z)>\textrm{upper}(\beta_{x,y;\cdot})$}
           \State $z_\star=\argmax_{z_i\in I_z}|\rho_{z_i-\hat{z_i}_w,x-\hat{x}_w}\rho_{z_i-\hat{z_i}_w,y-\hat{y}_w}|$
           \State $\textproc{push}((I_w,I_z\setminus z_\star))$
           \State $\textproc{push}((I_w\cup z_\star,I_z\setminus z_\star))$
        \EndIf
      \EndWhile
      \State \Return{$\textrm{lower}(\beta_{x,y;\cdot}), \textrm{upper}(\beta_{x,y;\cdot})$}
    \EndFunction
  \end{algorithmic}
  \label{alg:bounds}
\end{algorithm}

\section{Results}
\label{results}
In this section we state and prove a technical and mathematical result that supports our branch and bound algorithm. We also describe the results of some trial applications. We denote our branch and bound algorithm with BB, and we compare its performance with that of a brute force search algorithm (denoted with BF) that fits every possible model. We assess performance by measuring run times (see Table \ref{tab2}) and numbers of items that pass through the queue (see Table \ref{tab3}) for each of four example applications of the algorithm. The example data sets are described in Appendix \ref{AppA}. The trials were conducted using R version 4.3.1 on a 2017 MacBook Pro with a 2.9 GHz Quad-Core Intel Core i7 processor, 16 GB 2133 MHz LPDDR3 of memory, and 300 GB of disk space. 
\begin{proposition}
\label{Danny}
With the definitions of Section \ref{methods} we have the following identities:
\begin{subequations}
\begin{align}
\beta_{x,y;w,z}&=\beta_{x-\hat{x}_w,y;z-\hat{z}_w}\label{one}\\
&=\beta_{x-\hat{x}_w,y-\hat{y}_w;z-\hat{z}_w}\label{two}.
\end{align}
\end{subequations}
\end{proposition}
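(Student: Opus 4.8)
The plan is to recognize Proposition \ref{Danny} as a Frisch--Waugh--Lovell-type statement and to prove it with elementary projection algebra. Write $P_w = w(w^tw)^{-1}w^t$ for the orthogonal projection onto the column span of $w$ and $M_w = I_n - P_w$ for the complementary annihilator, where $I_n$ is the $n\times n$ identity; then $x-\hat{x}_w = M_wx$, $y-\hat{y}_w = M_wy$, and $z-\hat{z}_w = M_wz$. I will use repeatedly that $M_w$ is symmetric and idempotent, that $M_ww = 0$ and hence $M_wP_w = 0$, and that every Gram matrix inverted below is nonsingular by the standing assumption that all the column vectors in play are linearly independent. By definition, $\beta_{x,y;w,z}$ is the coefficient on the column $x$ in the least-squares regression of $y$ on the partitioned design $[\,x\mid w\mid z\,]$.

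The first step is \eqref{one}. Since $M_wx = x - w(w^tw)^{-1}w^tx$ and $M_wz = z - w(w^tw)^{-1}w^tz$ differ from $x$ and $z$ only by linear combinations of the columns of $w$, we have $[\,M_wx\mid w\mid M_wz\,] = [\,x\mid w\mid z\,]\,A$ for a block-triangular matrix $A$ with identity diagonal blocks whose only nonzero off-diagonal entries, namely $-(w^tw)^{-1}w^tx$ and $-(w^tw)^{-1}w^tz$, occupy the rows indexed by $w$; in particular $A$ is invertible and its first row is $(1,0,\dots,0)$. Because replacing a full-rank design $X$ by $XA$ multiplies the coefficient vector by $A^{-1}$, the coefficient on $x$ in the regression of $y$ on $[\,x\mid w\mid z\,]$ equals the coefficient on $M_wx$ in the regression of $y$ on $[\,M_wx\mid w\mid M_wz\,]$. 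Now $w^tM_w = 0$, so the columns $M_wx$ and $M_wz$ are orthogonal to every column of $w$; the Gram matrix of $[\,M_wx\mid w\mid M_wz\,]$ is therefore block diagonal, and the coefficients on $M_wx$ and $M_wz$ are the same whether or not the $w$ block is present. Hence that coefficient on $M_wx$ also equals the coefficient on $M_wx$ in the regression of $y$ on $[\,M_wx\mid M_wz\,]$, which is precisely $\beta_{x-\hat{x}_w,y;z-\hat{z}_w}$, and \eqref{one} follows.

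For \eqref{two} I compare the regression of $y$ on $D := [\,M_wx\mid M_wz\,]$ with the regression of $M_wy$ on the same $D$. Every column of $D$ lies in the range of $M_w$, so $D = M_w[\,x\mid z\,]$ and $D^tP_w = [\,x\mid z\,]^tM_wP_w = 0$; consequently $D^ty = D^t(P_wy + M_wy) = D^tM_wy$. The normal equations $(D^tD)\gamma = D^ty$ and $(D^tD)\gamma = D^tM_wy$ therefore coincide, so the two regressions return identical coefficient vectors, and reading off the coordinate belonging to $M_wx$ gives $\beta_{x-\hat{x}_w,y;z-\hat{z}_w} = \beta_{x-\hat{x}_w,y-\hat{y}_w;z-\hat{z}_w}$, which is \eqref{two}.

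I expect the only delicate point to be the bookkeeping in the first step: tracking which column corresponds to which coefficient coordinate through the change of basis $A$ and through the removal of the orthogonal $w$ block, so as to be sure it is exactly the coefficient on $x$, and nothing else, that is preserved at each stage. The remaining ingredients --- the idempotency and symmetry of $M_w$, the identities $M_ww = 0$ and $M_wP_w = 0$, and the block structure of Gram matrices under orthogonality --- are routine.
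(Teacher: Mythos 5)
Your proof is correct and follows essentially the same route as the paper's: both are the Frisch--Waugh--Lovell argument of re-expressing the design via the columns $M_wx$, $w$, $M_wz$, using the triangular change of basis to see that the coefficient on $x$ transfers to $M_wx$, and then exploiting orthogonality to $w$ to drop the $w$ block and to replace $y$ by $M_wy$. The paper phrases this at the level of fitted-value decompositions while you work with the explicit matrix $A$ and the normal equations, but the underlying idea is identical.
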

\begin{proof}
Without loss of generality we may assume that all vectors have been centered, i.e. we assume that all vectors are mean-zero vectors. %In what follows, least-squares projections of vectors are described with standard, hat notation of the form $\hat{z}_{z_1}$, where $z$ is the response vector and the subscript $z_1$ specifies the explanatory vectors. 
We think of the hat $\hat{~}$ as a projection operator. When a hat appears above a matrix that means the projection is applied to each of the columns of that matrix. The projection is onto a space indicated with a subscript. In what follows $w$ and $z$ are matrices, and $\beta_{w,y;x,z}$ and $\beta_{z,y;x,w}$ are vectors of fitted slope coefficients; see Section \ref{methods}. There exists another vector of slope coefficients $\beta_{w,y;(x-\hat{x}_w),(z-\hat{z}_w)}$ satisfying (\ref{next}) below: 
\begin{subequations}
\label{all}
\begin{align}
\label{first}
\hat{y}_{x,w,z}&:=\beta_{x,y;w,z}x+ w\beta_{w,y;x,z}+z\beta_{z,y;x,w}\\
&=\beta_{x,y;w,z}(\hat{x}_w+(x-\hat{x}_w))+w\beta_{w,y;x,z}+(\hat{z}_w+z-\hat{z}_w)\beta_{z,y;x,w}\\
&=\beta_{x,y;w,z}(x-\hat{x}_w)+(\beta_{x,y;w,z}\hat{x}_w+w\beta_{w,y;x,z}+\hat{z}_w\beta_{z,y;x,w})+(z-\hat{z}_w)\beta_{w,y;x,z}\\
&\label{next}=\beta_{x,y;w,z}(x-\hat{x}_w)+w\beta_{w,y;(x-\hat{x}_w),(z-\hat{z}_w)}+(z-\hat{z}_w)\beta_{w,y;x,z}\\
\label{last}&=:\hat{y}_{(x-\hat{x}_w),w,(z-\hat{z}_w)}=\hat{y}_{(x-\hat{x}_w),(z-\hat{z}_w)}+\hat{y}_w.
\end{align}
\end{subequations}
The right equality in (\ref{last}) holds since the columns of $w$ are orthogonal to both $x-\hat{x}_w$ and the columns of $z-\hat{z}_w$. The coefficient of $x$ in (\ref{first}) is the same as the coefficient of $(x-\hat{x}_w)$ in (\ref{next}), demonstrating the truth of (\ref{one}). Moreover, replacing $y$ with $y-\hat{y}_w$ affects only the coefficients of $\beta_{w,y;(x-\hat{x}_w),(z-\hat{z}_w)}$, demonstrating the truth of (\ref{two}).
\end{proof}
\noindent The identity in (\ref{two}) ensures that adjustment for the covariates of $w$ and $z$ can be accomplished by computing vectors of residuals after regression onto $w$ and then conducting a regression with those residual vectors and adjusting for only the residual covariates of $z$, thus justifying the formulations in (\ref{formula}) and (\ref{lu}). %Proposition \ref{Danny} is proven in Appendix \ref{AppA}.
%
%The tables below summarize the results of trials of BFS, BB, and BBR algorithms, and then BBR+QR, BBR+Indexing, BBR+Separate, and finally the optimal BBR+QR+Indexing+Separate, on various data sets. Section \ref{applications} and Appendix \ref{AppA} provide further descriptions of the data sets and how they were obtained. The trials were conducted using R version 4.3.1 on a 2017 MacBook Pro with a 2.9 GHz Quad-Core Intel Core i7 processor, 16 GB 2133 MHz LPDDR3 of memory, and 300 GB of disk space. 
%
%In Table \ref{tab2} we see that with the SUPPORT2 data set and its thirty covariates, the only algorithm to finish was the BBR algorithm. Both branch and bound algorithms spend more time at each node, but, as shown in Table \ref{tab3}, when the BBR algorithm is applied the conditions of (\ref{gencond}) frequently hold and the number of visited nodes is drastically reduced. We describe additional ways to improve the run times in Section \ref{discussioneff}.

\begin{table}[h!]
\centering
\caption{Run times in seconds (s) and minutes (min) of trials of the branch and bound (BB) algorithm and the brute force (BF) search algorithm, each applied to various data sets with \(n\) observations and \(p\) covariates}
\label{tab2}
\begin{tabular}{rcccc}
\toprule
 &&& \multicolumn{2}{c}{Algorithm}\\
\cmidrule{4-5}
Data Set&$n$&$p$&BF&BB\\
\hline
NHANES\textsubscript{1} & 14,208 & 10 & 11.281 s & 37.788 s\\
AllCountries & 215 & 21 & 50.493 min & 5.025 min\\
NHANES\textsubscript{2} & 4269 & 27 & $>180$ min & 62.834 min\\
SUPPORT2 & 9105 & 30 & $>180$ min & 42.354 min\\
\hline
\end{tabular}
\end{table}

\begin{table}[h!]
\centering
\caption{Number of items that pass through the queue, for trials of the branch and bound (BB) algorithm and the brute force (BF) search algorithm, each applied to various data sets with \(n\) observations and \(p\) covariates}
\label{tab3}
\begin{tabular}{rcccc}
\toprule
 &&& \multicolumn{2}{c}{Algorithm}\\
\cmidrule{4-5}
Data Set&$n$&$p$&BF&BB\\
\hline
NHANES\textsubscript{1} & 14,208 & 10 & 1024 & 1323\\
AllCountries & 215 & 21 & 2,097,152 & 30,537\\
NHANES\textsubscript{2} & 4269 & 27 & 134,217,728 & 78,561\\
SUPPORT2 & 9105 & 30 & 1,073,741,824 & 41,537\\
\hline
\end{tabular}
\end{table}

\section{An example application}
\label{applications}

% Every two years, the National Center for Health Statistics (NCHS) collects nutrition and health data via a complex, multistage, probability sampling design from a representative sample of the civilian, non-institutionalized US population \cite{NCHS}. This data collection process is called the National Health and Nutrition Examination Survey (NHANES). %, and it is widely used to gain insight into possible causes and effects of health. 
% Each participant is interviewed twice: once at home and once in a medical examination facility. We have utilized data from NHANES twice, first in what is labeled as NHANES\textsubscript{1} and second in what is labeled as NHANES\textsubscript{2}. Here we are focused exclusively on describing the details of our example application to the NHANES\textsubscript{2} data set. This more detailed example application is meant to demonstrate our methodology. Our results are not meant to support any definitive scientific conclusions.

Every two years, the Centers for Disease and Control Prevention (CDC) through the National Center for Health Statistics (NCHS) collects nutrition and health data via a complex, multistage, probability sampling design from a representative sample of the civilian, non-institutionalized US population \cite{NCHSguidelines2010}. This data collection process is called the National Health and Nutrition Examination Survey (NHANES). Each participant is interviewed twice: once at home and once in a medical examination facility. We have utilized data from NHANES twice, first in what is labeled as NHANES\textsubscript{1} and second in what is labeled as NHANES\textsubscript{2}. Here we are focused exclusively on describing the details of our example application to the NHANES\textsubscript{2} data set. This more detailed example application is meant to demonstrate our methodology. Our results are not meant to support any definitive scientific conclusions.

Within this detailed example application we have blood serum vitamin D (SD) levels as our $x$ variable and body mass index (BMI) as our $y$ variable. The SD levels were measured in nmol/L and assessed using the 25-hydroxy vitamin D test, a measure of both vitamin D2 and vitamin D3 in the blood \cite{Cavalier2021}. BMI was defined as a participant's weight in kilograms divided by the square of their height in meters ($\textrm{kg/m}^2$). Previous research indicates that a deficiency of SD is associated with obesity \cite{Vanlint2013}. Figure \ref{fig2} illustrates that association with a scatterplot showing negative correlation between SD concentration and BMI. While the slope coefficient is negative in the case of simple regression, it is possible that adding a set of covariates to the model may cause the association to reverse as described in \cite{Patel2015} and \cite{KD}. %The scientific literature on this topic is ambiguous \cite{Karampela2021,Vrani2019}.

\begin{figure}[h]
    \centering
    \includegraphics[width=.7\linewidth]{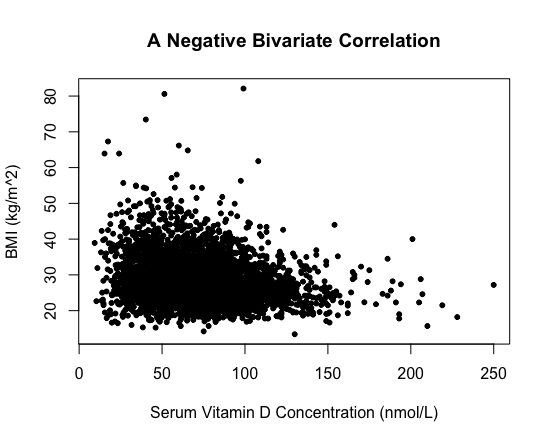}
    \caption{A scatterplot of SD and BMI.}
    \label{fig2}
\end{figure}

It is difficult to arrive at a definitive scientific conclusion when model uncertainty exists. We consider the following measured covariates: fish consumption, vitamin D supplementation, exercise activity level, age, gender, and two racial indicator variables (one for black and the other for white). Perhaps (outdoor) physical activity raises SD, but only amongst individuals with light skin, while lowering BMI. Likewise, perhaps some dietary variables increase SD while raising or lowering BMI, but only for younger individuals. For these hypothetical reasons, we include not just the seven covariate vectors of raw data, but many of their interaction terms as well. From these, we exclude only the interaction between the two racial indicators. The result is \[\left(\binom 7 2 - 1\right) + 7 = 27\] total covariate vectors.

The simplest model where we regress BMI onto SD yields a slope coefficient of $\beta_{\textrm{SD,BMI}}=-0.0496$ $\tfrac{\textrm{kg/m}^2}{\textrm{nmol/L}}$. However, we may adjust that slope coefficient by conditioning on any of $2^{27}=1,342,17,728$ different subsets of covariates. Instead of checking every possible adjustment, we apply our branch and bound algorithm to conclude that
\begin{equation}\label{tightbounds}-0.0581 \,\tfrac{\textrm{kg/m}^2}{\textrm{nmol/L}}\leq \beta_{\textrm{SD;BMI};\cdot}\leq -0.0440 \, \tfrac{\textrm{kg/m}^2}{\textrm{nmol/L}}.\end{equation}
The bounds on $\beta_{\textrm{SD,BMI};\cdot}$ are surprisingly tight, illustrating the potential of our branch and bound algorithm to inform researchers. 

% \textcolor{red}{Here is my prefered outline for this Application section: paragraph one should introduce the details of NHANES. Paragraph two should introduce x and y and reference a picture of a scatterplot. Paragraph 3 should emphasize the potential for reversals of least squares estimates, model uncertainty, and ambiguous scientific literature on this topic of whether x causes y. Paragraph 4 should describe how to select covariates. Paragraph 5 should talk about interaction (place your additional info there, compressed, if needed, and put extra stuff to the appendix). Paragraph 6 should describe how our algorithm can shine some light on the instability or uncertainty of the situation, and in paragraph six we can describe the results of the algorithm. If you need to save space and compress things then merge paragraphs 4 and 5 above. Again, anything extra can go to the appendix.}

\section{Discussion}
\label{discussion}
We have introduced a branch and bound algorithm to support interpretation of regression coefficients in the presence of model uncertainty. The branch and bound algorithm is based on the confounding interval of \cite{KOA} as described in Section \ref{methods}. It is supported by Proposition \ref{Danny} of Section \ref{results}. The results of trials described in Table \ref{tab2} of Section \ref{results} show that the branch and bound algorithm can be applied in situations where a brute force search is not applicable. The example application of Section \ref{applications} shows the practical utility of the branch and bound algorithm. Before application of the algorithm a reversal of the correlation from negative to positive was deemed plausible and possibly consistent with the measured covariate data, but after application of the algorithm the tight bounds of (\ref{tightbounds}) provide some assurance that the negative correlation is indeed stable and consistent with many models that could be fit to the observed data.

This paper was primarily concerned with demonstrating a proof of concept: we can bound slope coefficients over a large space of models utilizing a branch and bound methodology. When updating temporary maximum or minimum values for slope coefficients, however, we did not conduct any regression diagnostics. We did center vectors to make their means zero, but we did not transform any variables, e.g. by utilizing logarithmic or other transformations. We did not check to see if models were well fit, nor did we check for outliers. Our code can be modified to include such diagnostic checks and transformations if so desired. Our algorithm is meant to be applied to census data or data from a large representative sample. The methodology can be generalized for application to small sample data but that is beyond the scope of this work. 

Our branch and bound algorithm outperforms a brute force search primarily due to its bound that is applicable whenever $z$ lacks potential. We may say that $z$ has potential given $w$, $\textrm{lower}(\beta_{x,y;\cdot})$, and $\textrm{upper}(\beta_{x,y;\cdot})$ when either condition in (\ref{pot}) holds. Table \ref{tab3} shows some evidence that the bound is often applicable, i.e. that $z$ often lacks potential. We select $z_\star$ in (\ref{bstar}) with a product of correlations supported by \cite{EJASA}, and that prioritized selection is meant to increase the frequency with which $z$ lacks potential which reduces the run time. QR decompositions may be utilized to further enhance our algorithm, and if there is sufficient storage space then additional efficiency gains may be possible by storing vectors within the fields of queued items. 

During each of our four trials we found situations where $z$ lacks potential, but in general there is no guarantee. The branch and bound algorithm can be used whenever the principle of least squares is appropriate. We can apply our algorithm with models that are linear in their parameters. We can not directly apply our algorithm with logistic regression models, but our algorithm can be applied with polynomial regression models. It is straightforward to include higher-order interaction terms between covariates as additional vectors of covariate data. Inclusion of interaction terms that involve $x$ however is more complicated, as interpretation of the regression coefficient $\beta_{x,y;\cdot}$ would be affected, although such an approach could help distinguish between direct and indirect effects.

The confounding interval of \cite{KOA} was designed to handle unmeasured confounding. In our example application to assess the stability of the apparently protective effect of vitamin D against obesity, the bounds of (\ref{tightbounds}) were computed over subsets of measured covariates, including dietary variables, a physical activity variable, and four background characteristics. A skeptic may be concerned because important genomic variables were left out of the analysis. There are techniques for handling all possible, unmeasured confounders \cite{Knaeble2023,arxivOCT,arxivPartial}. In a regression context, \cite{Oster} has utilized an $R_{\textrm{max}}$ parameter to bound $R^2$ parameters strictly below one, which can be justified even while accounting for unmeasured covariates (see \cite[Theorem 3.3]{arxivEntropy} and \cite[Proposition 3.1]{arxivPartial}). However, we are unaware of a methodology that combines those techniques with a search over subsets of measured covariates and doesn't reduce to the basic confounding interval methodology of \cite{KOA}. We suspect that reduction to occur because conditioning on $w=\emptyset$ provides the most freedom within the feasible unmeasured covariate vector space to construct the most extreme confounding interval; see the mathematical details of \cite{KOA}. We think of our introduced branch and bound algorithm less as a methodology to handle unmeasured confounding and more as a technique to summarize an existing data set in the presence of model uncertainty.

\appendix

\section{Data Sets}
\label{AppA}
We applied our algorithm to four data sets which are described in the sub sections below. Readers desiring to eliminate the nonresponse and oversampling biases characteristic of the NHANES survey data may follow the guidelines given by the CDC in \cite{NCHSguidelines2010} and \cite{NCHSguidelines2012}.

\subsection{NHANES\textsubscript{1} Data Set}
The first data set we made use of was the NHANES\textsubscript{1} data set which featured $14,208$ observations from the years 1999 to 2004. The data was wrangled and cleaned by Patel et al. \cite{Patel2015} and posted on GitHub. For our algorithm, we considered BMI (\(y\)) regressed onto total calories consumed the day prior, as a proxy for daily caloric intake (\(x\)). We included 10 covariates: serum calcium levels (mg/dL), serum magnesium levels (mg/dL), serum potassium levels (mg/dL), serum phosphorus levels (mg/dL), serum sodium levels (mg/dL), fiber consumed the day prior (gm), iron consumed the day prior (mg), cholesterol consumed the day prior (mg), zinc consumed the day prior (mg), and thiamin (vitamin B1) consumed the day prior (mg).

\subsection{AllCountries Data Set}
The second data set we used was the AllCountries data set provided in the Lock5Data package in R \cite{Lock2017}. This data set featured 215 observations each representing individual countries. Data was collected for 2018 or the most recently available year. For our algorithm, we considered the percentage of government expenditures directed towards healthcare (\(y\)) regressed onto the percentage of the population at least 65 years old (\(x\)). We considered 21 covariates: country size (1000 square kilometers), country population (millions), density (no. people per square kilometer), gross domestic product per capita (USD), percentage of population living in rural areas, CO2 emissions (metric tons per capita), price for a liter of gasoline (USD), percentage of government expenditures directed towards the military, number of active duty military personnel (in 1000's), percentage of the population with access to the internet, cell phone subscriptions (per 100 people), percentage of the population with HIV, percentage of the population considered undernourished, percentage of the population diagnosed with diabetes, births per 1000 people, deaths per 1000 people, average life expectancy (years), percent of females 15-64 in the labor force, percent of labor force unemployed, kilotons of oil equivalent, and electric power consumption (kWh per capita).

\subsection{SUPPORT2 Data Set}
The third data set we utilized was the SUPPORT2 data set from the Vanderbilt University Department of Biostatistics \cite{Harrell2022}. This data set features $9,105$ observations representing critically ill hospital patients with advanced stages of severe illnesses from 5 medical centers. Data was collected from 1989 to 1991 and from 1992 to 1994. For our algorithm, we considered the number of days from entry into the study to discharge from the hospital (\(y\)) regressed onto white blood count (in thousands) measured at day 3 (\(x\)). We considered a total of 30 covariates: age, sex, race (white), race (black), days of follow-up, the number of simultaneous diseases exhibited by the patient, whether the patient's income exceeds \$50k per year, SUPPORT day 3 coma score based on Glasgow scale, hospital charges, total ratio of costs to charges, average TISS score from days 3-25, SUPPORT physiology score on day 3, APACHE III day 3 physiology score, SUPPORT model 2-month survival estimate on day 3, SUPPORT model 6-month survival estimate on day 3, whether the patient has diabetes, whether the patient has dementia, whether the patient has cancer, physician's 2-month survival estimate for the patient, physician's 6-month survival estimate for the patient, whether the patient has a do not resuscitate (DNR) order, day of DNR order, mean arterial blood pressure of the patient on day 3, heart rate of the patient on day 3, respiration rate of the patient on day 3, temperature (C) on day 3, serum creatinine on day 3, serum sodium concentration on day 3, arterial blood pH, and imputed ADL calibrated to surrogate.

\subsection{NHANES\textsubscript{2}}
The fourth data set utilized was the NHANES\textsubscript{2} data set which featured $4,269$ observations from the years 2007 to 2012. It is described in part in Section \ref{applications} and is available at the second author's GitHub \cite{Hughes24}. As mentioned in Section \ref{applications}, we considered body mass index (BMI) (our \(y\) variable) regressed onto serum vitamin D concentration (our $x$ variable). The names of the measured covariates are described in Section \ref{applications}. Here we describe some additional details. 23 variables indicating the frequency at which 23 fish species were eaten over the 30 days prior to the interview were summed together to produce a single fish score covariate. To measure vitamin D supplementation, we averaged participants' dosage of vitamin D supplements for the day before the at-home and on-site interviews. If a response was recorded for one of these days but not the other, we took the dosage for the recorded day. To gauge participants' physical activity, the NCHS suggests combining via a weighted sum 5 variables measuring minutes of moderate and vigorous exercise during work and recreation and minutes spent walking or biking during commuting in a typical week (see Appendix 1 of \cite{NCHS2012b}). We used the provided weights to combine these variables into a single measure of equivalent task (MET) score. After all the data was transformed, observations with missing data were dropped from the data set, leaving the $4,269$ observations mentioned above.
\section*{Acknowledgement}
Thank you to Kaleb Allen, Jonah Cragun, Nate Lovett, and Ammon Price for helping to implement our branch and bound algorithm.
\bibliographystyle{plain}

\end{document}